\documentclass[runningheads]{llncs}

\usepackage{ stmaryrd }
\usepackage{amsmath}

\usepackage{xspace,amssymb,amsmath}
\usepackage{xcolor}
\usepackage{hyperref}
\hypersetup{
  pdftitle = {},
  pdfauthor = {},
  colorlinks = true,
  linkcolor = black!30!red,
  citecolor = black!30!green
}

\usepackage{tikz}

\usepackage{indentfirst}

\binoppenalty=\maxdimen
\relpenalty=\maxdimen

\newcommand{\quotes}[1]{``#1''}

\DeclareMathOperator{\score}{score}
\DeclareMathOperator{\id}{id}

\title{Lower bound for constant-size local certification\thanks{J.-F. Raymond was
supported by the ANR project GRALMECO (ANR-21-CE48-0004-01). L. Feuilloley was supported by ANR project GrR (ANR-18-CE40-0032).}}

\institute{
LAMSADE, Université Paris Dauphine, France
\and Laboratoire G-SCOP, Univ. Grenoble Alpes
\and CNRS, LIRIS, Univ. Lyon, France
\and LaBRI, Université de Bordeaux, CNRS, Bordeaux, France
\and ENS Lyon, LIP, Lyon, France
\and CNRS, LIMOS, Université Clermont Auvergne, France
}

\author{Virgina Ardévol Martínez\inst{1} \and Marco Caoduro\inst{2} \and Laurent Feuilloley\inst{3}\orcidID{0000-0002-3994-0898}\and Jonathan Narboni\inst{4}\orcidID{0000-0002-3087-5073} \and Pegah Pournajafi\inst{5} \and Jean-Florent Raymond\inst{6}\orcidID{0000-0003-4646-7602}}
\authorrunning{Ardévol Martínez, Caoduro, Feuilloley, Narboni, Pournajafi,  Raymond}

\date{}

\begin{document}
\maketitle

\begin{abstract}
Given a network property or a data structure, a local certification is a labeling that allows to efficiently check that the property is satisfied, or that the structure is correct. 
The quality of a certification is measured by the size of its labels: the smaller, the better.
This notion plays a central role in self-stabilization, because the size of the certification is a lower bound (and often an upper bound) on the memory needed for silent self-stabilizing construction of distributed data structures. From the point of view of distributed computing in general, it is also a measure of the locality of a property (e.g. properties of the network itself, such as planarity).

When it comes to the size of the certification labels, one can identify three important regimes: the properties for which the optimal size is polynomial in the number of vertices of the graph, the ones that require only polylogarithmic size, and the ones that can be certified with a constant number of bits. The first two regimes are well studied, with several upper and lower bounds, specific techniques, and active research questions. On the other hand, the constant regime has never been really explored, at least on the lower bound side. 

The main contribution of this paper is the first non-trivial lower bound for this low regime. More precisely, we show that by using certification on just one bit (a binary certification), one cannot certify $k$-colorability for $k\geq 3$. 
To do so, we develop a new technique, based on the notion of score, and both local symmetry arguments and a global parity argument. 
We hope that this technique will be useful for establishing stronger results. 

We complement this result by a discussion of the implication of lower bounds for this constant-size regime, and with an upper bound for a related problem, illustrating that in some cases one can do better than the natural upper bound. 
\end{abstract}

\newpage{}

\section{Introduction}

Local certification consists in assigning labels to the nodes of a network, to allow them to check locally that some property holds~\cite{Feuilloley21}. 
Historically, the concept appeared implicitly in the study of self-stabilization, where in addition to computing the solution of the problem, the nodes would compute some additional information that would allow fast checking of the solution (i.e. a certification of the solution).
The most classic example is maybe the problem of computing a spanning tree, where in addition to computing the pointer to its parents, every node stores its distance to the root~\cite{AfekKY90}.
Such additional information is an overhead in the memory used, hence, it is a natural goal to minimize its size. 
In \cite{BlinFP14}, Blin, Fraigniaud, and Patt-Shamir proved that for a standard notion of self-stabilization called \emph{silent stabilization} and up to some hypothesis, the space needed for certification is the same as the space needed for self-stabilization

More recently, the question of the certification of properties of the network itself, and not the correctness of a data structure built on top of it, has attracted a lot of attention. This paper follows this direction, and we will only consider certification of graph properties. 

Let us now give a more precise intuition of what a local certification is (proper definitions will be given in Section~\ref{sec:model}). We denote the number of vertices of a graph by $n$. 
For a graph property $P$, we will say that it has a local certification of size $s$ if:
\begin{itemize}
    \item for any graph $G$ such that $P$ holds, there exists an assignment of labels of size $s(n)$ that can \quotes{convince} all the nodes that $P$ is satisfied,
    \item for any graph $G$ such that $P$ \emph{does not hold}, for any assignment of labels of size $s(n)$, there is at least one node that detects that the property is not satisfied.
\end{itemize}

At the level of the nodes, the behavior is the following ; every node runs the same local decision algorithm that takes as input all the information available in a neighborhood (i.e. a local view), and outputs a decision: \emph{accept} or \emph{reject}. 
For positive instances, all the nodes are convinced, that is, they all accept. For negative instances, at least one node rejects.

There are actually many possible models, depending on the notion of neighborhood considered, the presence of identifiers and how nodes can use them, etc. 
Two classic models are proof-labeling schemes~\cite{KormanKP10} and locally checkable proofs~\cite{GoosS16}.
The precise model is not essential for the discussion that follows, hence we delay their definitions to the model section.

\subsection{Three typical regimes for the certificate sizes}

As said earlier, a natural goal in the study of local certification is to minimize the size of the certificates. It is well-known that the optimal size is always in $O(n^2)$, since one can always use the adjacency matrix as a certificate and make the node check the consistency of this matrix with their neighborhoods, as well as check that the property holds in the graph described by the matrix~\cite[Theorem 3.2]{KormanKP10}.

Also, for any subquadratic function $f$, it is possible to engineer a property for which the optimal size is $f(n)$~\cite[Corollary 2.4]{KormanKP10}. In other words, if we consider the certificate size as the complexity of a property, there is no gap in the complexity of certification.
Nevertheless, for all the natural properties that have been studied, the optimal certificate size only belongs to one of the following three regimes (already identified in \cite{GoosS16}): polynomial, (poly)logarithmic, and constant size.
For example, there is no known natural properties with certificate size $\Theta(\log \log n)$, or $\Theta( \log^*n)$, or $\Theta(2^{\sqrt{\log n}})$.

In this paper, we are interested in lower bounds for the constant size regime. But let us provide a quick overview of the three regimes, in order to give the full picture and later discuss the novelty of our techniques.  

\paragraph{Polynomial regime.}

It is known that the $\Theta(n^2)$ size is needed for some specific properties, such as having a non-trivial automorphism~\cite[Theorem 6.1]{GoosS16} or having chromatic number at least 4 (up to polylog factors)~\cite[Theorem 6.4]{GoosS16}. 
Even innocent-looking properties such as having diameter 3 or being triangle-free require certificates of size $\Omega(n)$~\cite[Theorem 1]{Censor-HillelPP20} and~$n/e^{O(\sqrt{n})}$~\cite[Proposition 5]{CrescenziFP19}, respectively.

\paragraph{Polylogarithmic regime.}

The regime of (poly)logarithmic certificate size has attracted a lot of attention recently, and is often referred to as \emph{compact certification} (or LogLCP in~\cite{GoosS16}). 
The best-known local certification is the certification of acyclicity (that is, the vertification of the class of trees) for which the optimal size is $\Theta(\log n)$ (by a straightforward adaptation of \cite[Lemma 2.2]{KormanKP10}). 
It has been proved recently that planarity and bounded-genus~\cite{FeuilloleyFMRRT21,FeuilloleyFMRRT21-b,EsperetL22} have logarithmic certification, and that MSO properties on graphs of bounded treedepth~\cite{BousquetFP22} and bounded treewidth~\cite{FraigniaudMRT22} have respectively $\Theta(\log n)$ and $O(\log^2 n)$ local certifications.
An important open question in the area is to establish whether any graph class defined by a set of forbidden minors has a compact certification. Partial results are known for small minors~\cite{BousquetFP21} or minors with specific shapes (namely paths \cite[Corollary 2.7]{BousquetFP22} and planar graphs \cite[Corollary~3]{FraigniaudMRT22}).
Finally, let us mention one key result of the area, even if it is concerned with a data structure instead of a graph property: the optimal certificate size for a minimum spanning tree is $O(\log n \log W)$, where $W$ is the maximum weight~\cite{KormanK07}.

\paragraph{Constant size regime.}

First, let us note that for some properties, no certificate is needed. For example, checking that the graph is a cycle can be done by simply having every node check that it has exactly two neighbors (we will always assume that the graph is connected, thus there must be only one cycle). 

Now, let us make a connection with a class of \emph{construction problems}.
A key class of problems in distributed computing is the construction of locally checkable labelings~\cite{NaorS95}, or LCLs for short, that are the constant-size labelings that can be checked by inspecting a local neighborhood. Examples of LCLs are maximal independent sets, maximal matchings, and minimal dominating sets at some distance~$d$. There is now a very large literature on computing such labelings. 
The problem that has attracted the most interest is the one of vertex coloring: given an integer $k$ (that typically depends on the maximum degree), how fast can we compute an assignment of colors to the vertices, such that for every edge the endpoints have different colors. We refer to the monograph~\cite{BarenboimE13} for distributed graph coloring.

For any LCL, we can design a certification question:

\begin{question}\label{question:LCL}
How many bits are needed to certify that the graph has a solution?
\end{question}

Most classic LCL problems are designed so that any graph has a solution; for example, any graph has a maximal matching. But it is not true for any LCL; for example, given a positive integer $k\in \mathbb{N}$, not all graphs are $k$-colorable. 

When a solution exists, and can be checked by inspecting each node and its direct neighbors, it is trivial to design a certification for the question above: get a solution, give every node its label as a certificate, and for the verification, let the vertices run the local checking. 
Specifically, to certify that a graph is $k$-colorable, one can find a proper $k$-coloring, and then give to every node its color. It is then easy for the vertices to check this certification: every node checks that no neighbor has been given the same color as itself. This certification uses certificates of $\lceil \log k \rceil$ bits, and the key question that we would like to answer is:

\begin{question}\label{question:logk}
Can we do better than $\lceil \log k \rceil$ bits to certify that a graph is $k$-colorable?
\end{question}

This question was already listed in \cite{Feuilloley21} as one of the key open questions in the field, and in the following section, we will review a few reasons why it is a question worth studying.

\subsection{Motivation for studying the constant-size regime}

In local certification, and more generally in theoretical computer science, the focus is usually not on the precise constants in the complexities, thus one might consider the questions above to be non-essential. 
Let us list a few reasons why Questions~\ref{question:LCL} and~\ref{question:logk} are actually important.

\paragraph{An arena for new lower bound techniques.}

The lower bound techniques that we have for local certification are mainly of two types (see the survey~\cite{Feuilloley21} for precise citations and more detailed sketches).

First, there are the techniques based on counting arguments, also called cut-and-plug techniques, that can be rather sophisticated but boils down to the following fact:
if we use $o(\log n)$-bit labels in \emph{yes}-instances (that is, correct instances) then some (set of) labels will appear in different places of an instance (or in different instances), because there are $n$ vertices and $o(n)$ different labels. Using this, one can build a \emph{no}-instance and derive a contradiction. This is the technique used to show almost all lower bounds in the logarithmic regime.

The second classic technique is a reduction from communication complexity, which we will not sketch here, but simply mention that it works better for the polynomial regime. 

At this point, two things are clear: (1) we have only two main techniques, and they are now very well understood, and (2) they do not solve all our problems. 
In particular, they do not seem to apply to the $o(\log n)$ regime. 
One can hope that by trying to give a negative answer to Question~\ref{question:logk}, we will create new techniques, and that these techniques could be useful to establish new lower bounds.

\paragraph{A point of view on the encoding of LCLs.}

As mentioned earlier, the study of LCLs now plays a key role in our understanding of locality in distributed computing. By asking Question~\ref{question:LCL}, we are basically asking about how we express such problems. If it is possible to use fewer bits than the obvious encoding to certify that a solution exists, what does this tell us? If we cannot ``compress'' the encoding, can we say that it is the core of the problem? Such consideration could have an impact on techniques that heavily rely on precise encodings, such as round elimination~\cite{Suomela20} and local conflict colorings~\cite{MausT20,FraigniaudHK16}.

\paragraph{Beyond the constant regime.}

Up to now, we have considered the question of $k$-coloring with constant $k$, thus Question~\ref{question:logk} was about the constant regime \emph{per se}. 
But actually, one could let $k$ depend on $n$, and the question is meaningful for labels of size up to $\Theta(\log n)$ (which corresponds to coloring with $\Theta(n)$ colors). 
Hence, we are not only playing with constants when studying Question~\ref{question:logk} in the general case. Note that if we could show that $k$-coloring requires labels of size $\Theta(\log k)$ for all the range of $k$, then we would also have fairly natural problems strictly between constant and logarithmic, which would be new and interesting in itself.

\paragraph{A candidate for disproving the trade-off conjecture.}

One of the remaining important open questions in local certification is the following, which we will call the \emph{trade-off conjecture}.

\begin{question}
Suppose that there exists a local certification with labels of size $f(n)$ for some property, where every node would check its radius at distance 1. 
Is it true that there always exists a certification with labels of size $O(f(n)/t)$ if we allow the nodes to see at distance $t$ in the graph?
\end{question}

This question and variants of it were raised in~\cite{OstrovskyPR17,FeuilloleyFHPP21,FischerOS21}. In these papers, the authors basically prove  (among other results) that the answer is positive if the certification uses only spanning trees and a uniform certification (giving the same information to every node).
Since these are the main tools used for certification in the logarithmic and polynomial regimes, the constant regime (and its extension beyond constant, discussed above) seems to be the place to find potential counterexamples. Note that for the constant regime, one could try to disprove the conjecture with $\alpha f(n)/t$, for some given constant $\alpha$. And it seems like a reasonable approach, since it is difficult to imagine how the trade-off conjecture could be true for coloring-like problems: even if the nodes can see further, how can we save bits to certify colorability? In order to prove such a counterexample, we first need to have proper lower bounds for distance 1, that is, to answer Question~\ref{question:logk}.

\subsection{Our results and techniques}

In this paper, we give the first non-trivial lower bound  for the certification of $k$-colorability. 
This is the first step of a research direction that we hope to be successful, and in terms of result it is a small step, in the sense that our answer to Question~\ref{question:logk} is restricted in several ways. 
The first restriction we make is that instead of proving a $\log k$ lower bound, we prove that it is not possible to certify $k$-colorability for any $k\geq 3$, if we use \emph{only one bit} (that is, only two different labels). We will call this a \emph{binary certification}.  
The second restriction is that we take a model that is not the most powerful one. 
In our result, a vertex has access to the following information: its identifier, its label, and the multiset of a of its neighbors. That is, a node cannot see further than its direct neighbors and cannot access the identifiers of its neighbors (which corresponds to the original proof-labeling scheme model~\cite{KormanKP10}, but not to the generalizations, such as locally checkable proofs~\cite{GoosS16}), and there is no port number. 

To prove this result, we introduce a new technique. Similarly to the cut-and-plug technique mentioned earlier, we reason about one or several \emph{yes}-instances and prove that we can craft a \emph{no}-instance where the vertices would accept. 
But the reasoning is different, since counting arguments based on the pigeon-hole principle applied to the certificates can only lead to $\Omega(\log n)$ lower bounds. 
First, we define the notion of \emph{score} for a neighborhood and prove that if two vertices have been given different labels but have the same score, then we can build a \emph{no}-instance that is accepted. 
Then we prove that this necessarily happens in some well-chosen graphs, thanks to a series of local symmetry arguments, and a global parity argument.

We complement this main result by proving that in some cases (namely distance-2 3-colorability) one can actually go below the size of the natural encoding. As we will see, this happens because graphs that are distance-2 3-colorable have a very specific shape. This illustrates why establishing lower bounds for such problems is not so easy: the fact that the graph can (or cannot) be colored with a given number of colors implies that it has a given structure, and this structure could in theory be used in the certification to compress the natural $\log k$-bit encoding. 

\section{Models and definitions}
\label{sec:model}

We denote by $ \mathbb N $ the set of non-negative integers, and by $ |A| $ the cardinality of a set $ A $. All graphs in this paper are simple and connected.
The vertex-set and the edge-set of a graph $ G $ are denoted by $V(G)$ and $E(G)$, respectively. The \emph{closed neighborhood} of a vertex $ v \in V(G) $, denoted by $ N[v]$, is defined by $ N[v] = N(v) \cup \{v\}$ where $ N(v) =  \{ u \in V(G): uv \in E(G) \} $ is the neighborhood of $ v $. We denote the complete graph on $ n $ vertices by $ K_n $. 
A \emph{proper $k$-coloring} of the vertex-set of a graph $ G $ is a function $ \phi: V(G) \rightarrow \{1,2,\dots, k\} $ such that if $ xy \in E(G)$, then $ \phi(x) \neq \phi(y)$. In other words, it is an assignment of colors to the vertices of $ G $ using at most $ k$ colors, such that the endpoint of every edge receive different colors. We say that $ G $ is \emph{$k$-colorable} if it admits a proper $ k$-coloring.

Let $ f: \mathbb N \rightarrow \mathbb N \cup \{\infty\} $ be a function. We say that a graph $G$ on $n$ vertices is equipped with an \emph{identifier assignment} of range $f(n)$ if every vertex is given an integer in $[1,f(n)]$ (its \emph{identifier}, or \emph{ID} for short) such that no two vertices of the graph are given the same number. 
Typically, $f(n)$ is some polynomial of $n$, and in this paper it has to be at least $n^3$ (but we did not try to optimize this parameter). 

A \emph{certificate assignment} of size $s$ of a graph $G$ is a labeling of the vertices of $G$ with strings of length $s$, that is, a function $\ell \colon V(G) \to \{0,1\}^s$. A \emph{binary certificate assignment} is a certificate assignment with $s=1$.

As hinted earlier, there are many variants for the definition of local certification. An important aspect is the type of algorithm that the node run.
This is a local algorithm, in the sense that the nodes can see only a neighborhood in the graph, but this neighborhood can be at distance 1, constant, non-constant etc.
Another important aspect is the symmetry-breaking hypothesis: whether there are identifiers, whether the nodes can see the identifiers of their neighbors, whether they can distinguish these neighbors, etc.
In this paper, we use the following notion.

\begin{definition}

A \emph{local decision algorithm} is an algorithm that runs on every vertex of a graph. It takes as input the identifier of the node, the certificate of the node, and the multiset of certificates of its neighbors, and outputs a decision, \emph{accept} or \emph{reject}.
\end{definition}

\begin{definition}

Fix a function $ f: \mathbb N \rightarrow \mathbb N \cup \{\infty\} $. A \emph{proof-labeling scheme of size $s$ for a property $P$} is a local decision algorithm $A$ such that the following holds: for every graph $G$ and every identifier assignment of range $ f(|V(G)|) $ of $ G $ there exists a certificate assignment of size $s$ of $ G $ such that $A$ accepts on every vertex in $ V(G)$, if and only if, the graph $G$ has property $P$.
\end{definition}

Notice that the proof-labeling scheme depends on the chosen function $ f $. 

In the proofs, as a first step, we will prove the result in a weaker anonymous model.

\begin{definition}
An \emph{anonymous proof-labeling scheme} is the same as a proof-labeling scheme, but the graphs are not equipped with identifiers (or equivalently, the outcome of the local decision algorithm is invariant by a change of the identifiers).
\end{definition}

For a graph $ G $ with identifier $ \id $ and labeling $ \ell $, the \emph{view} of a vertex $ v $ in $(G, \id, \ell)$ in the proof-labeling scheme is a tuple $(M_v, \id(v))$ where $M_v$ is the multiset $$\{ (\ell(u), \id(u)) : u \in N(v)\}.$$
In the anonymous case, the \emph{view} of a vertex $ v $ is only the multiset defined above.

\section{$k$-colorability does not have a binary certification}

This section contains our main contribution. We prove that $k$-colorability does not have a binary certification when $k\geq 3$. Recall that for $ k = 2 $, $ k $-colorabily indeed has a binary certification (take the colors as certificates).  

\subsection{Indistinguishability setting}

Let us first clarify the proof strategy with Lemma~\ref{lem:techniques}. It is a classic strategy, that we detail for completeness.

\begin{lemma} \label{lem:techniques} 
Let $ s $ be a positive integer, $ f: \mathbb N \rightarrow \mathbb N \cup \{\infty\}$ be a function, and $ \Lambda \subseteq \mathbb N $ be a set of indices. If for every $ i \in \Lambda $ there exist a connected graph $ G_i$ with identifier $ \id_i : V(G_i) \rightarrow [1, f(|V(G_i)|)]$, and there exists a connected graph $ H $ such that 
\begin{enumerate}
    \item $ G_i $ is $ k$-colorable for every $ i \in \Lambda$,
    \item $ H $ is not $ k$-colorable,
    \item for every set of labelings $ \{\ell_i\}_{i \in \Lambda} $, where $ \ell_i:V(G_i) \rightarrow \{0,1\}^s $ is a labeling of size $ s $ of $ G_i $, there exists a labeling $\ell:V(H) \rightarrow \{0,1\}^s$ of size $ s $ of $ H $ and an identifier $ \id:V(H)~\rightarrow~[1, f(|V(H)|)] $ such that for every view in $ (H,\id, \ell) $ there exists $ i \in \Lambda $ such that the view is the same as some view in $ (G_i, \id_i, \ell_i) $,
\end{enumerate}
then $k$-colorability cannot be certified by certificates of size $ s $. 

In case $ G $ and $ H $ do not have identifiers, the same holds with removing the identifier functions and $ f $ from the statement of the Lemma. 
\end{lemma}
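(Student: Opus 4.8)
The plan is to show that the three hypotheses together make it impossible for any proof-labeling scheme of size $s$ to certify $k$-colorability, by deriving a contradiction from the assumed existence of such a scheme. Suppose, for contradiction, that $A$ is a proof-labeling scheme of size $s$ for $k$-colorability (with identifier range $f$). I would first use hypothesis (1): for each $i \in \Lambda$, since $G_i$ is $k$-colorable and equipped with the identifier assignment $\id_i$, the definition of a proof-labeling scheme guarantees a certificate assignment $\ell_i : V(G_i) \to \{0,1\}^s$ on which $A$ accepts at every vertex of $G_i$. This gives us exactly the family $\{\ell_i\}_{i \in \Lambda}$ needed to invoke hypothesis (3).

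Next I would apply hypothesis (3) to this particular family $\{\ell_i\}_{i\in\Lambda}$, obtaining a labeling $\ell : V(H) \to \{0,1\}^s$ and an identifier assignment $\id : V(H) \to [1, f(|V(H)|)]$ such that every view in $(H, \id, \ell)$ coincides with some view in some $(G_i, \id_i, \ell_i)$. The crucial observation is that the local decision algorithm's output at a vertex depends only on that vertex's view (its own identifier and certificate, and the multiset of certificates of its neighbors — and in the anonymous case, just the multiset). So for an arbitrary vertex $v \in V(H)$, its view equals the view of some vertex $v'$ in some $(G_i, \id_i, \ell_i)$; since $A$ accepts at $v'$ (because $A$ accepts everywhere on $G_i$ under $\ell_i$), $A$ must also accept at $v$. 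Hence $A$ accepts at every vertex of $H$ under the certificate assignment $\ell$ and identifier assignment $\id$.

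But now we have a contradiction: $A$ is a proof-labeling scheme for $k$-colorability, so "$A$ accepts on every vertex of $H$ for some certificate assignment under the identifier assignment $\id$" would force $H$ to be $k$-colorable, contradicting hypothesis (2) that $H$ is not $k$-colorable. Therefore no proof-labeling scheme of size $s$ can certify $k$-colorability. The anonymous case is handled identically, simply dropping the identifier functions $\id_i$, $\id$ and the range function $f$ throughout, and using the anonymous notion of view (the multiset of neighbor certificates) in the same argument.

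I do not expect any serious obstacle here, since this is a standard indistinguishability argument and the lemma has been explicitly set up so that its hypotheses match the steps above. The only point requiring a little care is making fully precise the claim that "the decision at $v$ depends only on the view of $v$" — this is immediate from the definition of a local decision algorithm, whose inputs are exactly the identifier of $v$, the certificate of $v$, and the multiset of certificates of the neighbors of $v$, i.e.\ precisely the data recorded in the view $(M_v, \id(v))$. One should also note that the identifier assignment $\id$ produced in hypothesis (3) is a genuine valid identifier assignment of range $f(|V(H)|)$ (injective, values in the right range), which is part of what hypothesis (3) asserts, so the final invocation of the definition of a proof-labeling scheme for $H$ is legitimate.
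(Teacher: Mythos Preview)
Your proposal is correct and follows essentially the same approach as the paper's proof: assume a scheme exists, extract accepting labelings $\ell_i$ on the $G_i$, invoke hypothesis~(3) to transfer views to $H$, and conclude that $H$ is accepted, contradicting hypothesis~(2). The paper's version is simply more terse; your added remarks about why the decision depends only on the view and about the validity of $\id$ are fine elaborations but not new ideas.
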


\begin{proof} 
Suppose there exists a local certification of size $s$ for $k$-colorability. Then, in particular, for every $ i \in \Lambda $, there exists a labeling $\ell_i$ for the graph $G_i$ such that the verifier algorithm accepts on every vertex. For this set of labelings $\{\ell_i\}_{i \in \Lambda}$, consider the labeling $\ell$ and the identifier assignment $\id$ of $H$ described in item (3) of the Lemma. The verifier algorithm accepts on every vertex of $ (H, \id, \ell) $ because its view is the same as a view in $ (G_i, \id_i, \ell_i)$ for some $ i \in \Lambda$. This contradicts the fact that $H$ is not $k$-colorable.  \qed
\end{proof}

\subsection{Notion of score}

Let $\ell $ be a binary labeling of a graph $ G $, and let $ v \in V(G)$. The \emph{score} of~$v$ in~$\ell$, denoted by $\score_{\ell}(v)$ or $\score(v)$ if there is no confusion, is defined as follows:
$$
\score_\ell(v) = |\{ u \in N[v] : \ell(u)=1 \}|.
$$

Given a $k$-regular graph $ G $ and a binary labeling $\ell $ of $ G $, the \emph{score matrix} of~$(G,\ell)$ is a $2\times(k+2)$ matrix $S$ with rows labeled with $0$ and $1$ and columns labeled from $0$ to $k+1$. Let $ S_{i,j} $ denote the $(i,j)$ element of $ S $. We set
$$
S_{1,0} = S_{0,k+1} = 1,
$$
and for $ i=0,1$, $ j=0,1,\dots, k+1$, and $(i,j) \neq (1,0), (0,k+1)$ we set
$$S_{i,j} = |\{ v \in V(G) : \ell(v) = i, \score(v) = j \}|.$$

\subsection{Our graph construction and its properties}

Fix an integer $ k\geq 3$. We build a graph as follows: take the disjoint union of three copies of $K_{k+1}$. For $ i = 1,2,3 $, let $a_i $ and $ b_i$ be two distinct vertices in the $i$-th copy. Then, remove the edges $a_1b_1$, $a_2b_2$, and $a_3b_3$ from the graph, and add the edges $b_1a_2$, $b_2a_3$, and $b_3a_1$ to it. We denote the resulting graph by $ N_k $. See Figure~\ref{fig:graph-N_k}. In the figure, each set $C_t$, $t \in \{1,2,3\}$ induces a $K_{k-1}$ in the graph and $a_t$ and $ b_t $ are complete to $C_t$, i.e.\ every vertex of $ C_t$ is connected by an edge to $ a_t$ and to $b_t$.

\begin{figure}[h]
    \centering
    \begin{tikzpicture}[scale=.5]
\draw (70:5cm) -- (110:5cm);    
\draw (190:5cm) -- (230:5cm);
\draw (310:5cm) -- (350:5cm);
\draw (70:5cm) -- (30:5.3cm);
\draw (70:5cm) -- (30:2.7cm);
\draw (110:5cm) -- (150:5.3cm);
\draw (110:5cm) -- (150:2.7cm);
\draw (190:5cm) -- (150:5.3cm);
\draw (190:5cm) -- (150:2.7cm);
\draw (230:5cm) -- (270:5.3cm);
\draw (230:5cm) -- (270:2.7cm);
\draw (310:5cm) -- (270:5.3cm);
\draw (310:5cm) -- (270:2.7cm);
\draw (350:5cm) -- (30:5.3cm);
\draw (350:5cm) -- (30:2.7cm);
\draw[fill] (70:5cm) circle (3pt);
\draw[fill] (110:5cm) circle (3pt);
\draw[fill] (190:5cm) circle (3pt);
\draw[fill] (230:5cm) circle (3pt);
\draw[fill] (310:5cm) circle (3pt);
\draw[fill] (350:5cm) circle (3pt);
\draw (70:5cm) node [anchor=south west] {$a_2$};
\draw (110:5cm) node [anchor=south east] {$b_1$};
\draw (190:5cm) node [anchor=east] {$a_1$};
\draw (230:5cm) node [anchor=north east] {$b_3$};
\draw (310:5cm) node [anchor=north west] {$a_3$};
\draw (350:5cm) node [anchor=west] {$b_2$};
\draw[rotate=30,fill=white] (4,0) ellipse (1.7cm and .8cm);
\draw[rotate=150,fill=white] (4,0) ellipse (1.7cm and .8cm);
\draw[rotate=270,fill=white] (4,0) ellipse (1.7cm and .8cm);
\draw (30:4cm) node {$C_2$};
\draw (150:4cm) node {$C_1$};
\draw (270:4cm) node {$C_3$};
\end{tikzpicture}
    \caption{The graph $N_k$.}
    \label{fig:graph-N_k}
\end{figure}
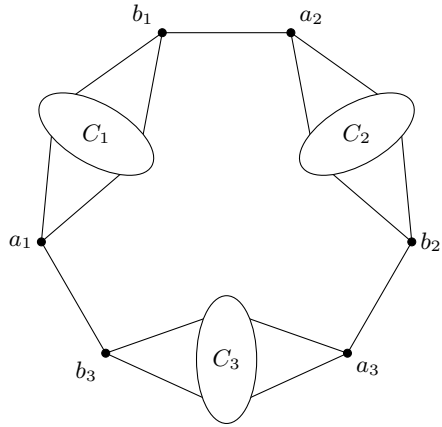

\begin{lemma}
For every $ k \geq 3$, the graph $ N_k $ is $k$-colorable.
\end{lemma}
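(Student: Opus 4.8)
The goal is to exhibit a proper $k$-coloring of $N_k$. Recall the structure: three copies of $K_{k+1}$, with one edge $a_tb_t$ removed from each copy and three ``connector'' edges $b_1a_2$, $b_2a_3$, $b_3a_1$ added, so that within copy $t$ the set $C_t$ induces a $K_{k-1}$ complete to both $a_t$ and $b_t$, while $a_t$ and $b_t$ are non-adjacent. The plan is to color each copy separately using all $k$ colors and then argue that the colors can be rotated so that the endpoints of the connector edges disagree.

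First I would color copy $t$ as follows: since $C_t \cup \{a_t\}$ and $C_t \cup \{b_t\}$ are both cliques of size $k$ (the removed edge is exactly $a_tb_t$), assign the $k-1$ vertices of $C_t$ the colors $3,4,\dots,k+1$ (that is, $k-1$ distinct colors from the palette $\{1,\dots,k\}$ — concretely, all colors except $1$ and $2$), and then the only freedom left is the pair $(a_t,b_t)$, which must receive the two remaining colors $\{1,2\}$ but may take them in either order, since $a_t$ and $b_t$ are not adjacent to each other. So for each $t$ there are exactly two valid local colorings: $(\phi(a_t),\phi(b_t)) = (1,2)$ or $(2,1)$.

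Next I would enforce the connector constraints $\phi(b_1)\neq\phi(a_2)$, $\phi(b_2)\neq\phi(a_3)$, $\phi(b_3)\neq\phi(a_1)$. Writing $x_t \in \{1,2\}$ for $\phi(a_t)$ and noting $\phi(b_t)$ is then the other element of $\{1,2\}$, each connector constraint $\phi(b_t)\neq\phi(a_{t+1})$ (indices mod $3$) becomes $\phi(b_t) = \phi(a_{t+1})$ being forbidden, i.e. we need $\phi(a_{t+1}) \neq \phi(b_t)$, equivalently $x_{t+1} = x_t$ (since $\phi(b_t)$ is the color $a_t$ does \emph{not} have, asking $a_{t+1}$ to differ from it means $a_{t+1}$ has the same color as $a_t$). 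Thus all three constraints are simultaneously satisfied by taking $x_1 = x_2 = x_3 = 1$, say $\phi(a_t) = 1$ and $\phi(b_t) = 2$ for every $t$. One then checks directly that this is a proper coloring: every original edge inside a clique has distinct endpoints by construction, and every connector edge $b_ta_{t+1}$ joins a vertex colored $2$ to one colored $1$.

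This argument has essentially no obstacle — the only point requiring a moment's care is getting the parity/rotation bookkeeping of the connector constraints right (it is precisely because there are three cliques glued in a \emph{cycle} that a consistent choice of the $x_t$ exists, whereas an \emph{odd twist} in the cycle of connectors would force a contradiction; here the gluing is untwisted, so a uniform choice works). I would therefore present the coloring explicitly as above and verify propriety edge-class by edge-class, which makes the proof a short, direct check rather than anything delicate.
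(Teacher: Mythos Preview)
Your argument has a genuine counting error that breaks the coloring. You propose to color the $k-1$ vertices of $C_t$ with the colors ``$3,4,\dots,k+1$'' and call this ``all colors except $1$ and $2$'' from the palette $\{1,\dots,k\}$. But $\{3,\dots,k+1\}$ is not contained in $\{1,\dots,k\}$, and $\{1,\dots,k\}\setminus\{1,2\}$ has only $k-2$ elements, not $k-1$. So either you are using a $(k+1)$st color, or two vertices of the clique $C_t$ receive the same color; in either case the proposed coloring is not a proper $k$-coloring.

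The underlying issue is structural, not just a typo: since $C_t$ is a clique on $k-1$ vertices and both $a_t$ and $b_t$ are complete to $C_t$, any proper $k$-coloring must use $k-1$ distinct colors on $C_t$, leaving exactly \emph{one} remaining color in the palette. Hence $a_t$ and $b_t$ are forced to receive the \emph{same} color, say $c_t$; there is no freedom to give them different colors from $\{1,2\}$ as you attempt. The connector constraints $\phi(b_t)\neq\phi(a_{t+1})$ then become $c_t\neq c_{t+1}$ (indices mod $3$), so one needs $c_1,c_2,c_3$ pairwise distinct, which is possible precisely because $k\ge 3$. This is what the paper does: set $\phi(a_t)=\phi(b_t)=t$ and give $C_t$ the colors $\{1,\dots,k\}\setminus\{t\}$. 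Your write-up can be salvaged by making exactly this change; the rest of your edge-class verification then goes through.
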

\begin{proof}
For every $ t \in \{1,2,3\}$, color $ a_t $ and $ b_t $ with color $ t $ and color the $ k-1 $ vertices in $ C_t $ with the $ k-1 $ colors $ \{ 1,2, \dots, k \} \setminus \{t\} $. It is easy to check that this is a proper coloring of the vertex-set of $ N_k$ and as $ k \geq 3 = t $, we have used exactly the $ k $ colors $ \{1, 2, \dots, k\}$. \qed
\end{proof}


\begin{lemma} \label{lem:labels-of-Nk-are-nice}
Assume $ k\geq 3$ is an integer and set $ G = N_k $. If $ \ell: V(G) \rightarrow \{0,1\} $ is a binary labeling of $  G $, and $ S$ is the score matrix of $ (G, \ell)$, then there exists an integer $ j \in \{0,1,\dots, k+1\} $ such that $ S_{0,j} S_{1,j} \neq 0$. 
\end{lemma}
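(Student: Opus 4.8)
The plan is to argue by contradiction: assume that $S_{0,j}S_{1,j}=0$ for every $j\in\{0,1,\dots,k+1\}$, and then produce two vertices of $N_k$ that receive different labels but have the same score, with that common value lying in $\{1,\dots,k\}$ — which contradicts the assumption. Two easy observations come first. Since every vertex of $N_k$ has degree $k$, we have $|N[v]|=k+1$, so all scores lie in $\{0,\dots,k+1\}$ and the score matrix is well defined. Moreover, because $S_{1,0}=S_{0,k+1}=1$ by convention, the contradiction hypothesis already forces that \emph{no} label-$0$ vertex has score $0$ and \emph{no} label-$1$ vertex has score $k+1$: as soon as such a vertex is found, we are done.

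Next I would exploit the rigidity of $N_k$. Write $x_t=\ell(a_t)$, $y_t=\ell(b_t)$, and let $m_t$ be the number of label-$1$ vertices in $C_t$. Since $N[c]=C_t\cup\{a_t,b_t\}$ for every $c\in C_t$, all vertices of $C_t$ share the score $\sigma_t=m_t+x_t+y_t$. If some $C_t$ is not monochromatic, i.e.\ $0<m_t<k-1$, then $C_t$ contains both a label-$0$ and a label-$1$ vertex, both of score $\sigma_t\in\{1,\dots,k\}$, and we are done. So we may assume each $C_t$ is monochromatic, of colour $z_t\in\{0,1\}$, i.e.\ $m_t=(k-1)z_t$. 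A direct computation (indices mod $3$) then gives $\score(a_t)=x_t+(k-1)z_t+y_{t-1}$, $\score(b_t)=y_t+(k-1)z_t+x_{t+1}$, and $\score(c)=(k-1)z_t+x_t+y_t$ for $c\in C_t$. At this point the problem has collapsed to the nine Boolean parameters $x_1,x_2,x_3,y_1,y_2,y_3,z_1,z_2,z_3$.

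The core is a local symmetry step followed by a case split. As $a_t$ and $b_{t-1}$ are adjacent and $\score(a_t)-\score(b_{t-1})=(k-1)(z_t-z_{t-1})$, whenever $z_t=z_{t-1}$ these two vertices have equal score; if in addition $x_t\neq y_{t-1}$ they form a label-$0$/label-$1$ pair of common score $1+(k-1)z_t\in\{1,k\}$, and we are done. Hence we may assume $z_t=z_{t-1}\Rightarrow x_t=y_{t-1}$ for every $t$, and at least two of $z_1,z_2,z_3$ coincide. If all three equal some $z$, then all $x_t=y_{t-1}$ and $\score(a_t)=2x_t+(k-1)z$; if $x_t=z$ this is $(k+1)z\in\{0,k+1\}$ with $\ell(a_t)=z$, a forbidden configuration, so $x_t=1-z$ for all $t$ (hence all $y_t=1-z$), and then every $c\in C_t$ has label $z$ and score $(k-1)z+2(1-z)\in\{2,k-1\}\subseteq\{1,\dots,k\}$, the same score as $a_t$, which has label $1-z$ — done. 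Otherwise exactly two of the $z_t$ agree; using the rotational automorphism $t\mapsto t+1$ of $N_k$ and the fact that the statement is invariant under flipping all labels (which swaps $S_{i,j}$ with $S_{1-i,k+1-j}$), we may assume $z_1=z_2=0$, $z_3=1$. The $t=2$ constraint gives $x_2=y_1$, so $\score(a_2)=2x_2$; since $\ell(a_2)=x_2$, the value $x_2=0$ is forbidden, so $x_2=y_1=1$; and $\ell(c)=1$ for $c\in C_3$ forces $x_3+y_3\le 1$. A finite check on $(x_1,x_3,y_2,y_3)$ then finishes: in each remaining case one reads off either a label-$0$ vertex of score $0$ or two vertices of different labels sharing a score in $\{1,\dots,k\}$.

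I expect the main obstacle to be precisely that last case split: there is no single invariant that disposes of the two-equal-$z_t$ case at once, so one has to track the score/label pairs of the (at most) nine vertex types $a_1,a_2,a_3,b_1,b_2,b_3,C_1,C_2,C_3$ and verify by hand that they cannot be "separated" into disjoint label classes. The real leverage is in the earlier steps — the homogeneity of scores on $C_t$ (turning arbitrary labelings into nine bits) and the adjacency identity $\score(a_t)-\score(b_{t-1})=(k-1)(z_t-z_{t-1})$ (turning "equal colours on consecutive $C$'s" into "equal scores"). One must also keep checking, throughout, that the common scores produced land in $\{1,\dots,k\}$ and not at the extremes $0$ or $k+1$; this is exactly where the hypothesis $k\ge 3$ is needed.
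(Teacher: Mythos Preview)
Your argument is correct and follows the same strategy as the paper: rule out the extreme scores $0$ and $k+1$, force each $C_t$ to be monochromatic (since its vertices share a closed neighbourhood and hence a score), and then locate a label-$0$/label-$1$ pair of equal score among the remaining vertex types. The paper organises the endgame around the pigeonhole $\ell(a_{t})=\ell(a_{t'})$ and a chain of explicit neighbourhood comparisons such as $N[c]\setminus\{a_1\}=N[b_1]\setminus\{a_2\}$; you instead organise it around the identity $\score(a_t)-\score(b_{t-1})=(k-1)(z_t-z_{t-1})$ and a case split on the multiset $\{z_1,z_2,z_3\}$ with symmetry reductions. This is a slightly more algebraic packaging of the same ideas, and the symmetry shortcut (rotation $t\mapsto t+1$ and label flip $S_{i,j}\leftrightarrow S_{1-i,k+1-j}$) is a clean device the paper does not use.

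One small correction for the two-equal-$z_t$ branch. The finite check you announce does go through (at most twelve configurations remain once $x_2=y_1=1$ and $x_3+y_3\le 1$ are imposed), but not every case ends as you describe. For example, with $z_1=z_2=0$, $z_3=1$ and $(x_1,x_3,y_2,y_3)=(1,1,1,0)$ one gets $\score(a_3)=k+1$ with $\ell(a_3)=1$; similarly $(x_1,x_3,y_2,y_3)=(1,0,1,1)$ gives $\score(b_3)=k+1$ with $\ell(b_3)=1$. So the contradiction there is a label-$1$ vertex of score $k+1$, not ``a label-$0$ vertex of score $0$'' nor a matched pair. Your summary sentence should include that symmetric outcome; once it does, the case analysis is complete.
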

\begin{proof} 
If there exists a vertex with label $0$ whose neighbors all have label $0$ as well, then $ S_{0,0} = 1$, and since by definition, $S_{1,0} = 1 $, choosing $ j=0$ gives us the required result. Similarly, if there exists a vertex with label $1$ whose neighbors all have label $1$ as well, then $ S_{1,k+1} = 1$, and as $ S_{1,k+1} = 1$, choosing $ j = k+1$ gives us the required result. Hence, from now on, we may assume that
\begin{equation} \label{assumption1}
    \text{there exists no vertex in $ G $ that has the same label as all its neighbors.}
\end{equation}

For $t \in \{1,2,3\}$, set $ C_t = N(a_t) \cap N(b_t) $. Notice that $C_t $ induces a clique of size $ k-1 $ in $ G$. For all $ t \in \{1,2,3\}$, for all $ u, v \in C_t$, we have $ N[u] = N[v]$, thus $ \score(u) = \score(v) $. If two distinct vertices $ u $ and $ v $ of $ C_t $ have different labels, then by choosing $ j = \score(v) $, we have $S_{0,j}S_{1,j} \neq 0$. So, we may assume that for every $ t \in \{1,2,3\} $,
\begin{equation}
    \text{all the vertices of $ C_t $ have the same label.}  \label{assumption2}
\end{equation}
Thanks to~(\ref{assumption2}), for the rest of this proof and by abuse of notion, we use the term \emph{the label of $ C_t$} for referring to the common label of the vertices of $ C_t $, and we denote it by $ \ell(C_t)$.
Also, notice that by~(\ref{assumption1}), if $ \ell(C_t) = i$, where $ i \in \{0,1\} $, then at least one of the vertices $ a_t $ and $ b_t $ must receive the label $ 1-i $. Thus, for every $ t \in \{1,2,3\} $, 
\begin{equation}
    \text{at least one of $ a_t$ and $b_t$ has a label different from the one of $ C_t$.} \label{assumption3}
\end{equation}

\newcommand{\lbl}{\hat i} 

Since there are three indices $\{1,2,3\}$, but only two labels $\{0,1\}$,  there exist $ t, t' \in \{1,2,3\} $ such that $t\neq t'$ and $ \ell(a_{t}) = \ell(a_{t'}) $. By symmetry, we may assume $ \ell(a_1) = \ell(a_2) = \lbl $ for some $ \lbl  \in \{0,1\}$. 
Notice that for every $ u \in C_1$, $N[u]\setminus \{a_1\} = N[b_1] \setminus \{a_2\} $, so $ \score(u) = \score(b_1)$. Thus if the label of $ C_1$ is different from the label of $ b_1$, then choosing $j = \score(b_1)$ completes the proof. So, we may assume that $\ell(C_1) = \ell(b_1)$. Therefore, by~(\ref{assumption3}), we must have $ \ell(C_1) \neq \ell(a_1) $. Consequently, $ \ell(b_1) = \ell(C_1) = 1- \lbl $.

Now, if $  \ell(b_3) = 1-\lbl  $, then for every $ u \in C_1 $ we have $\score(a_1) = \score(u) $. And because $ \ell(C_1)\neq \ell(a_1) $, choosing $ j = \score(a_1)$ completes the proof. Hence we assume $ \ell(b_3) = \lbl$. Now, because of~(\ref{assumption1}), there must be a vertex $ u $ in the neighborhood of $ b_3 $ with label $ 1-\lbl $. As we already have $ \ell(a_3) = \lbl $, we must have $ u \in C_3$, and therefore by~(\ref{assumption2}), $\ell(C_3) = 1-\lbl$. 

Moreover, if $ \ell(C_2) = 1-\lbl $, then we have $ \score(a_2) = \score(b_1) $, and as $a_2$ and $b_1$ have different labels, choosing $ j = \score(a_2)$ completes the proof. So, we also assume that $ \ell(C_2) = \lbl $. Thus $ \ell(C_2) = \ell(a_2) = \lbl $. Therefore, by~(\ref{assumption3}), we must have $\ell(b_2) = 1 - \lbl $. Now, notice that the neighbors of $ a_3 $ all have label $ 1 - \lbl $, thus by~(\ref{assumption1}), we must have $ \ell(a_3) = \lbl $.

The labels of vertices of $ G $, with all the assumptions so far, are as shown in Figure~\ref{fig:all-labels-in-lemma}.

\begin{figure}[h]
    \centering
    \begin{tikzpicture}[scale=.5]
{\small
\draw (70:5cm) -- (110:5cm);    
\draw (190:5cm) -- (230:5cm);
\draw (310:5cm) -- (350:5cm);
\draw (70:5cm) -- (30:5.3cm);
\draw (70:5cm) -- (30:2.7cm);
\draw (110:5cm) -- (150:5.3cm);
\draw (110:5cm) -- (150:2.7cm);
\draw (190:5cm) -- (150:5.3cm);
\draw (190:5cm) -- (150:2.7cm);
\draw (230:5cm) -- (270:5.3cm);
\draw (230:5cm) -- (270:2.7cm);
\draw (310:5cm) -- (270:5.3cm);
\draw (310:5cm) -- (270:2.7cm);
\draw (350:5cm) -- (30:5.3cm);
\draw (350:5cm) -- (30:2.7cm);
\draw[fill] (70:5cm) circle (3pt);
\draw[fill] (110:5cm) circle (3pt);
\draw[fill] (190:5cm) circle (3pt);
\draw[fill] (230:5cm) circle (3pt);
\draw[fill] (310:5cm) circle (3pt);
\draw[fill] (350:5cm) circle (3pt);
\draw (80:4.8cm) node [anchor=south west] {$\ell(a_2)=\hat i$};
\draw (115:5.2cm) node [anchor=south] {$\ell(b_1)=1-\hat i$};
\draw (190:5.1cm) node [anchor=east] {$\ell(a_1)=\hat i$};
\draw (235:4.8cm) node [anchor=north east] {$\ell(b_3)=\hat i$};
\draw (305:4.8cm) node [anchor=north west] {$\ell(a_3)=\hat i$};
\draw (350:5.1cm) node [anchor=west] {$\ell(b_2)=1-\hat i$};
\draw[rotate=30,fill=white] (4,0) ellipse (1.7cm and .8cm);
\draw[rotate=150,fill=white] (4,0) ellipse (1.7cm and .8cm);
\draw[rotate=270,fill=white] (4,0) ellipse (1.7cm and .8cm);
\draw (30:4cm) node {$C_2$};
\draw (150:4cm) node {$C_1$};
\draw (270:4cm) node {$C_3$};
\draw (25:5.4cm) node [anchor=south west] {$\ell(C_2)=\hat i$};
\draw (155:5.4cm) node [anchor=south east] {$\ell(C_1)=1-\hat i$};
\draw (270:5.4cm) node [anchor=north] {$\ell(C_3)=1-\hat i$};
}
\end{tikzpicture}
    \caption{The labels at the end of the proof of Lemma~\ref{lem:labels-of-Nk-are-nice}.}
    \label{fig:all-labels-in-lemma}
\end{figure}
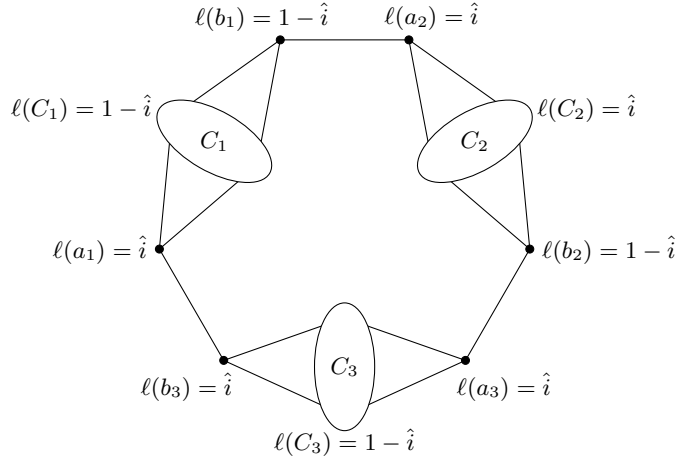

To conclude, consider $ a_1 $ and a vertex $ u \in C_3 $. In their closed neighborhoods, they both have twice the label $ \lbl $ and $ k-1 $ times the label  $1- \lbl $, so $\score(a_3) = \score(u)$. Moreover, they have different labels. So, by choosing $ j = \score(a_3)$ we have the required result. \qed
\end{proof}

\subsection{Anonymous case}

\begin{theorem}
For every $ k \geq 3 $, $k$-colorability is not certifiable by binary certificates in the anonymous model.
\end{theorem}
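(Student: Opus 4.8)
The plan is to apply Lemma~\ref{lem:techniques} with $s = 1$, in the anonymous (identifier-free) variant. For the family of \emph{yes}-instances $\{G_i\}_{i \in \Lambda}$ I would take a single graph, namely $G = N_k$ itself, which we already know is $k$-colorable. For the \emph{no}-instance $H$ I would take $K_{k+1}$, which is certainly not $k$-colorable. So the entire burden is condition~(3) of Lemma~\ref{lem:techniques}: given an arbitrary binary labeling $\ell$ of $N_k$, I must produce a binary labeling $\ell'$ of $K_{k+1}$ such that every view occurring in $(K_{k+1}, \ell')$ already occurs in $(N_k, \ell)$. In the anonymous model, a view of a vertex $v$ is just the multiset $\{\ell(u) : u \in N(v)\}$, which for a binary labeling is completely determined by two numbers: $\ell(v)$ itself and the number of $1$-labelled neighbors of $v$ --- equivalently, by the pair $(\ell(v), \score_\ell(v))$. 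So condition~(3) reduces to: find a binary labeling of $K_{k+1}$ in which every vertex's $(\text{label}, \text{score})$ pair is realized by some vertex of $(N_k, \ell)$.

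Here is where the score matrix and Lemma~\ref{lem:labels-of-Nk-are-nice} come in. Note that $N_k$ is $k$-regular, so its score matrix $S$ is well-defined, and the conventions $S_{1,0} = S_{0,k+1} = 1$ record the two ``extreme'' views that \emph{do} occur in any $K_{k+1}$: a vertex labelled $1$ all of whose neighbors are labelled $0$ (score $0$... wait, score counts the closed neighborhood, so this is the all-$0$-neighborhood-with-self-$1$ case, score $1$; I should be careful and instead say: the view in which a vertex has label $1$ and all $k$ neighbors have label $0$) and the symmetric view with labels swapped. Concretely, in $K_{k+1}$ with $m$ vertices labelled $1$, every $1$-labelled vertex sees $m-1$ ones among its neighbors and every $0$-labelled vertex sees $m$ ones; so the labeling of $K_{k+1}$ with exactly $m$ ones produces exactly two views, corresponding to matrix entries $S_{1, m}$ and $S_{0, m}$ (using the convention that $m=0$ gives the ``$S_{0,k+1}$-type'' view carried by the all-zero labeling's lone... actually the cleanest statement is: for each $j \in \{0, \dots, k+1\}$, there is a labeling of $K_{k+1}$ whose two views are exactly the pair $(1,j)$ and $(0,j)$, where the pair $(1,0)$ is realized by the all-$0$ labeling of $K_{k+1}$ --- any single vertex plays both roles vacuously --- and $(0,k+1)$ by the all-$1$ labeling). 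The point of Lemma~\ref{lem:labels-of-Nk-are-nice} is precisely that there exists $j$ with $S_{0,j} S_{1,j} \neq 0$, i.e.\ both the view $(0,j)$ and the view $(1,j)$ occur in $(N_k, \ell)$ (reading the conventions $S_{1,0} = S_{0,k+1} = 1$ as saying the corresponding views are ``available for free'' because they occur in $K_{k+1}$ trivially). Given such a $j$, I label $K_{k+1}$ with exactly $j$ vertices receiving label $1$ (and if $j = k+1$, all $k+1$ vertices get label $1$; if $j=0$, all get label $0$) --- then the only two views in this labeled $K_{k+1}$ are $(1,j)$ and $(0,j)$, both of which occur in $(N_k,\ell)$. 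That verifies condition~(3), and Lemma~\ref{lem:techniques} then yields the theorem.

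The main technical obstacle --- already dispatched by Lemma~\ref{lem:labels-of-Nk-are-nice} --- is guaranteeing that \emph{some} score value $j$ is simultaneously ``hit'' by a $0$-vertex and a $1$-vertex (or is one of the two free values $0$ and $k+1$). What makes this delicate is that it is genuinely false for a single clique $K_{k+1}$: there, every labeling has exactly one $1$-score and one $0$-score and they differ by one, so no common $j$ arises, which is exactly why $N_k$ had to be built by the three-copies-with-swapped-edges construction. The three copies force, via the local symmetry arguments (vertices in each $C_t$ have identical closed neighborhoods, and the $a_t$/$b_t$ vertices share closed-neighborhood structure across the surgically modified edges) plus the pigeonhole on three indices versus two labels, that one cannot ``escape'' to a configuration avoiding a repeated score --- this is the content already proven. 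So in the write-up the only new work is the bookkeeping translating $S_{i,j} \neq 0$ into an explicit labeling of $K_{k+1}$ and checking it has the claimed two views, which is routine.

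\begin{proof}
We apply Lemma~\ref{lem:techniques} in the anonymous variant, with $s=1$, with $\Lambda = \{0\}$ a singleton, $G_0 = N_k$, and $H = K_{k+1}$. By the two preceding lemmas, $G_0 = N_k$ is $k$-colorable, and $K_{k+1}$ is not $k$-colorable since it has $k+1$ mutually adjacent vertices; so conditions~(1) and~(2) of Lemma~\ref{lem:techniques} hold. It remains to verify condition~(3).

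Let $\ell \colon V(N_k) \to \{0,1\}$ be an arbitrary binary labeling, and let $S$ be the score matrix of $(N_k, \ell)$; note $N_k$ is $k$-regular, so $S$ is defined. By Lemma~\ref{lem:labels-of-Nk-are-nice}, there is $j \in \{0,1,\dots,k+1\}$ with $S_{0,j}\, S_{1,j} \neq 0$.

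We first record the views occurring in a labeled $K_{k+1}$. In the anonymous model, the view of a vertex $v$ under a binary labeling is the multiset of labels of its neighbors, which --- since labels are $0$ or $1$ --- is determined by the pair $(\ell(v), \score(v))$. Fix $m \in \{0,1,\dots,k+1\}$ and let $\ell_m$ be a labeling of $K_{k+1}$ with exactly $m$ vertices labeled $1$. Every vertex of $K_{k+1}$ is adjacent to all $k$ others, so a vertex labeled $1$ sees $m-1$ ones among its neighbors and has score $m$, while a vertex labeled $0$ sees $m$ ones among its neighbors and also has score $m$. Hence, for $1 \le m \le k$, the views occurring in $(K_{k+1}, \ell_m)$ are exactly those with $(\text{label}, \text{score})$ equal to $(1,m)$ or $(0,m)$; for $m=0$ (all labels $0$) the only view has pair $(0,0)$, and for $m=k+1$ (all labels $1$) the only view has pair $(1, k+1)$.

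Now distinguish cases according to $j$. If $1 \le j \le k$, then since $S_{0,j} \ge 1$ and $S_{1,j} \ge 1$, both views with pairs $(0,j)$ and $(1,j)$ occur in $(N_k, \ell)$; take $\ell' = \ell_j$ on $K_{k+1}$. If $j = 0$, then $S_{0,0} \ge 1$ (as $S_{1,0} = 1$ is the convention and the constraint is $S_{0,0} S_{1,0} \neq 0$), so the view with pair $(0,0)$ occurs in $(N_k,\ell)$; take $\ell' = \ell_0$, the all-$0$ labeling, whose unique view has pair $(0,0)$. If $j = k+1$, symmetrically $S_{1,k+1} \ge 1$, so the view with pair $(1,k+1)$ occurs in $(N_k,\ell)$; take $\ell' = \ell_{k+1}$, the all-$1$ labeling, whose unique view has pair $(1,k+1)$. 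In every case, every view in $(K_{k+1}, \ell')$ coincides with some view in $(N_k, \ell)$, which is condition~(3).

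By Lemma~\ref{lem:techniques}, $k$-colorability cannot be certified by binary certificates in the anonymous model. \qed
\end{proof}
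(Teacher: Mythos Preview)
Your proof is correct and follows essentially the same approach as the paper: apply Lemma~\ref{lem:techniques} with the single \emph{yes}-instance $N_k$ and \emph{no}-instance $K_{k+1}$, invoke Lemma~\ref{lem:labels-of-Nk-are-nice} to obtain a score value $j$ hit by both labels, and label $K_{k+1}$ with exactly $j$ ones so that its (at most two) views are realized in $(N_k,\ell)$. The only quibble is phrasing: you write that ``the view of a vertex $v$ \dots\ is the multiset of labels of its neighbors,'' but in the paper's model the view also includes the vertex's own certificate---your subsequent parametrization by the pair $(\ell(v),\score(v))$ is nonetheless exactly right, so the argument goes through.
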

\begin{proof}
Let $ G $ be the graph $ N_k $ and $ H $ be a complete graph on $ k+1 $ vertices. 
Let $ \ell $ be a binary labeling of $ G $, and let $S $ be the score matrix of $ (G, \ell)$. Since $ G $ is $k$-colorable and $ H $ is not, by Lemma~\ref{lem:techniques}, to prove the theorem, it is enough to find a binary labeling $\ell':V(H) \rightarrow \{0,1\} $ of $ H $  such that every view in $ (H, \ell') $ is a view in $ (G,\ell) $.

By Lemma~\ref{lem:labels-of-Nk-are-nice}, there exists $ j \in \{0,1,\dots, k+1\} $ such that $ S_{0,j} $ and $S_{1,j} $ are non-zero, 
meaning that:
\begin{itemize}
\item
if $ j = 0$, then there exists a vertex $ u \in V(G) $ such that $\ell(u) = 0 $ and all its neighbors have label $0$,
\item
if $ j = k+1 $, then there exists a vertex $ v \in V(G) $ such that $\ell(v) = 1 $ and all its neighbors have label $1$, or,
\item
if $ 0 < j < k+1 $, then there exist two distinct vertices $ u,v \in V(G) $ such that $ \ell(u)=0$, $ \ell(v) = 1$, and there are $j$ vertices of label $1$ in $N(u)$, and $ j-1$ vertices of label $1$ in $N(v)$.
\end{itemize}

Let $ V(H) = U \cup V $ such that $ |U| = k+1 -j $ and $ |V| = j $. If $j \in \{0, k+1\}$, then $U$ or $V$ is an empty set. Notice that the condition on the cardinalities of $ U $ and $ V $ implies that $ U \cap V = \varnothing $. Define:
$$
\ell'(w) = 
\begin{cases}
0 & w \in U \\
1 & w \in V
\end{cases}
$$
Notice that if $ j = 0 $ or $ j = k+1 $, then $ \ell'(\cdot) = 0 $ or $ \ell'(\cdot) = 1 $ respectively. 

The view of each vertex of $ U $ in $ H $ is the same as the view of $ u $ in $ G $ and the view of each vertex of $ V $ in $ H $ is the same as the view of $ v $ in $ G $. So, every view in $ H $ is the same as some view in $ G $. \qed
\end{proof}

\subsection{Extension to identifiers}

\begin{theorem}
For every $ k \geq 3 $, $k$-colorability is not certifiable by binary certificates in the proof-labeling scheme model when the range of the identifiers for a graph on $ n $ vertices is $ f(n) = n^3+3n $.
\end{theorem}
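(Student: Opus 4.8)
The plan is to lift the anonymous result to the identifier model by carefully generalizing the construction. The obstacle is that with identifiers, two vertices whose views match in the anonymous sense may still be distinguishable, because a view now carries the pair $(\ell(u),\id(u))$ for each neighbor, plus the vertex's own $\id$. So we cannot simply reuse one vertex of $G=N_k$ to produce many vertices of $H=K_{k+1}$: they would need pairwise distinct identifiers, but also their neighborhoods (as multisets of (label, id) pairs) would have to coincide with neighborhoods actually occurring in $G$, and in $K_{k+1}$ every vertex sees all the others, so all $k+1$ local id-multisets are strongly constrained.

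First I would take not a single copy of $N_k$ but a family $\{G_i\}_{i\in\Lambda}$ of disjoint copies, each with its own identifier assignment, so that Lemma~\ref{lem:techniques} can be applied with a larger index set; the count $f(n)=n^3+3n$ is tailored so that there is enough identifier space to host all these copies together with $H$. Concretely, $N_k$ has $n_0 = 3(k+1)$ vertices; I would use roughly $k+1$ copies (one ``slot'' per vertex of $H$), assign them disjoint blocks of identifiers, and in each copy invoke Lemma~\ref{lem:labels-of-Nk-are-nice} to extract, for each binary labeling $\ell_i$, an index $j_i$ and the associated pair of vertices $u_i$ (label $0$) and $v_i$ (label $1$) with the score/neighbour-count property. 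The point of having many copies is to be able to choose, across the copies, the identifiers of these extracted vertices and of their neighbours freely enough that the $k+1$ vertices we place in $H$ can be made to have views matching views that genuinely occur somewhere in the family.

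The heart of the argument is then the id-bookkeeping for $H=K_{k+1}$. I would pick the target identifiers $q_1,\dots,q_{k+1}$ for the vertices of $H$ in advance (as a fixed set of $k+1$ distinct numbers in $[1,f(k+1)]$). For a vertex $w_m$ of $H$ that should ``copy'' vertex $u$ of some $G_i$: its view must be $\bigl(\{(\ell'(w_{m'}),q_{m'}) : m'\neq m\},\, q_m\bigr)$, and this must equal the view of $u$, namely $\bigl(\{(\ell(x),\id_i(x)) : x\in N_{G_i}(u)\},\, \id_i(u)\bigr)$. Since $u$ in $N_k$ has degree exactly $k$, the multiset on the left ($k$ elements) can match: we need $\id_i(u)=q_m$ and a bijection between $N_{G_i}(u)$ and $\{w_{m'}\}_{m'\neq m}$ preserving labels and sending $\id_i(x)$ to $q_{m'}$. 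The labels already match by the choice of $j$ (there are exactly $k+1-j$ zeros and $j$ ones on both sides once we partition $V(H)$ as in the anonymous proof). So the real content is: we are free to design the identifier assignments $\id_i$ of the copies \emph{after} fixing the target ids $q_1,\dots,q_{k+1}$, hence we can simply \emph{plant} the values $q_m$ on the appropriate vertices $u,v$ and their neighbours in the copies — the only thing to check is global injectivity of each $\id_i$ and that the ranges fit inside $[1,f(n)]$, which is where $f(n)=n^3+3n$ is used (with $n = |V(H)| = k+1$, we need room for $k+1$ copies of an $(3k+3)$-vertex graph plus $H$ itself, comfortably below $(k+1)^3+3(k+1)$).

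The one subtlety I expect to fight with is that $u$ and $v$ extracted in the $0<j<k+1$ case are two \emph{distinct} vertices of the \emph{same} copy $G_i$, and the bijections required for the $U$-vertices of $H$ and for the $V$-vertices of $H$ must be \emph{simultaneously} realizable: the neighbours of $u$ and the neighbours of $v$ in $G_i$ overlap in complicated ways, yet we need both $N(u)$ and $N(v)$ to map onto subsets of the same fixed id-set $\{q_1,\dots,q_{k+1}\}$ consistently with a single labeling $\ell'$ of $H$. I would handle this by giving each vertex of $H$ its target id first, then arguing that since we are constructing the $\id_i$ from scratch we can name the vertices of $G_i$ so that $u$'s neighbours get exactly the ids $\{q_{m'} : w_{m'}\in U\setminus\{w_m\}\}\cup\{q_{m'}:w_{m'}\in V\}$ in the right label-pattern, and $v$'s neighbours likewise; a short case analysis (using that $u\in U$-slot, $v\in V$-slot, and the edge structure of $N_k$) shows the two requirements are compatible because the only forced coincidence is on the shared neighbours, which already carry a fixed label. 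With that resolved, item (3) of Lemma~\ref{lem:techniques} holds and the theorem follows exactly as in the anonymous case. \qed
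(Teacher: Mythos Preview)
You have misread the model, and this sends the whole plan off course. In this paper's proof-labeling scheme a vertex sees its \emph{own} identifier, its own certificate, and the multiset of its neighbours' \emph{certificates} --- not their identifiers. This is spelled out in the Definition of a local decision algorithm and in the introduction (``a node \dots\ cannot access the identifiers of its neighbors''); the later sentence defining a ``view'' with pairs $(\ell(u),\id(u))$ is inconsistent with that, and the proof in the paper only makes sense under the weaker notion. So all of your ``id-bookkeeping for $H=K_{k+1}$'' is unnecessary: once two vertices have the same own identifier, the same own label, and the same multiset of neighbour labels, their views coincide.

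Given the correct model, the paper's argument is short. Take $K=(k+1)^2+1$ copies of $N_k$ with pairwise disjoint identifier blocks inside $[1,f(3k+3)]$. For any family of labelings $\{\ell_i\}$, Lemma~\ref{lem:labels-of-Nk-are-nice} yields for each $i$ some $j_i\in\{0,\dots,k+1\}$; by pigeonhole at least $k+1$ copies share a common $j$. From those copies pick $k+1$ witnesses (one per copy): $j$ with label $1$ and score $j$, and $k+1-j$ with label $0$ and score $j$. Give the vertices of $H=K_{k+1}$ exactly these $k+1$ identifiers and labels. Each $u_t$ then has the same own id, own label, and neighbour-label multiset as its witness $v_t$, so every view in $H$ already occurs in some $G_i$. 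The choice $f(n)=n^3+3n$ is precisely what makes both the disjoint blocks for the $G_i$'s and the resulting ids for $H$ fit in range.

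Even setting the model issue aside, your plan has two real gaps. First, ``roughly $k+1$ copies'' is not enough: the value $j$ produced by Lemma~\ref{lem:labels-of-Nk-are-nice} depends on the labeling and ranges over $k+2$ possibilities, so with only $k+1$ copies you cannot force a common $j$ across $k+1$ of them; the pigeonhole step requiring $K=(k+1)^2+1$ is essential. Second, and more seriously, your ``plant the values $q_m$ on the appropriate vertices $u,v$'' step inverts the quantifiers of Lemma~\ref{lem:techniques}: the identifier assignments $\id_i$ must be fixed \emph{before} the adversarial labelings $\ell_i$ are chosen, but the witness vertices $u_i,v_i$ (and which of their neighbours carry which label) only become known \emph{after} the $\ell_i$ are fixed. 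You therefore cannot pre-place the target ids on them. This is exactly why the paper does not try to match neighbour identifiers at all --- and does not need to.
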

\begin{proof}
Set $K=(k+1)^2+1$ and $ \Lambda = \{1, 2, \dots, K\}$.

Let $ G_1, G_2, \dots, G_K$ be $K$ connected graphs each isomorphic to~$N_k$. Notice that $ |V(G_i)| = 3k+3 $. For $ i \in \Lambda $, consider an identifier $ \id_i :~V(G_i) \rightarrow [1, f(|V(G_i)|)] $ of $ G_i $ such that the vertices of $ G_i $ receive ID's in the range $[(i-1)(3k+3)+1, i(3k+3)]$. Notice that 
\begin{equation*}
(i-1)(3k+3)+1 \geq (1-1)(3k+3)+1 = 1,
\end{equation*}
and 
\begin{equation*}
\begin{split}
    i(3k+3) & \leq K(3k+3) \\ 
    & = \big( (k+1)^2+1 \big) (3k+3) \\
    & = \frac{(3k+3)^3}{27}+(3k+3) \\
    & \leq (3k+3)^3 + 3(3k+3) \\
    & = f(3k+3) = f(|V(G_i)|).
\end{split}    
\end{equation*}
So, $[(i-1)(3k+3)+1, i(3k+3)] \subseteq [1, f(|V(G_i)|)]$. Thus the identifiers $\id_i$ exist.

Let $ H $ be a complete graph on $ k+1 $ vertices. Notice that each $ G_i $ is $k$-colorable and $H$ is not. Hence, by Lemma~\ref{lem:techniques}, to prove the theorem, it is enough to define an identifier $ \id: V(H) \rightarrow [1, f(|V(H)|)] $ and a binary labeling assignment $ \ell:V(H) \rightarrow \{0,1\}$ of $ H $ such that every view in $(H, \id, \ell)$ is the same as a view in $ (G_i, \id_i, \ell_i) $ for some $ i\in \Lambda$. 

Let $ S^{(i)} $ be the score matrix of  $(G_i, \ell_i)$. By Lemma~\ref{lem:labels-of-Nk-are-nice}, for every $ i \in \Lambda $, there exists $ j \in \{0,1,\dots, k+1\}$ such that $ S^{(i)}_{0,j}S^{(i)}_{1,j} \neq 0$. Therefore, by the pigeonhole principle, there exists an integer $ j \in \{0, 1, \dots, k+1\}$ and a subset $ \Lambda_0 $ of $ \Lambda $ such that $ |\Lambda_0| \geq k+1 $ and for all $i \in \Lambda_0$, we have $ S^{(i)}_{0,j}S^{(i)}_{1,j} \neq 0 $.

Notice that $j \in \{ 0,1, \dots, k+1\} $, hence 
$$ |\Lambda_0| \geq k+1 \geq k+1 -j  \text{ and } |\Lambda_0| \geq k+1 \geq j. $$
Thus, we can find $ j $ distinct vertices $ v_1, v_2, \dots, v_j \in \bigcup_{i \in \Lambda_0} V(G_i) $ with label 1 and score $j$, and $ k+1 - j $ distinct vertices $ v_{j+1}, v_{j+2}, \dots, v_{k+1} \in \bigcup_{i \in \Lambda_0} V(G_i) $ with label 0 and score $j $. Notice that in case that $ j = 0 $ (resp.\ $ j = k+1$), then the first (resp.\ the second) set of vertices in empty.

Now, assume $ V(H) = \{ u_1, u_2, \dots, u_{k+1} \} $. 

First, define an identifier $ \id: V(H) \rightarrow [1, f(|V(H)|)] $ as follows: 
$$ \id(u_t) = \id(v_t) \text{ for every }  t \in \{1,2,\dots, k+1\}.$$ 
Notice that by this definition, for all $ t$, we have:
$$\id(u_t) \in \bigcup_{i \in \Lambda} [(i-1)(3k+3)+1, i(3k+3)] = [1, K(3k+3)]. $$
On the other hand: 
\begin{equation*}
\begin{split}
K(3k+3) = & \frac{(3k+3)^3}{27}+(3k+3) \\
& = (k+1)^3 + 3(k+1) \\
& =  f(k+1) = f(|V(H)|),
\end{split}
\end{equation*}
and therefore the image of $\id $ is a subset of $[1, f(|V(H)|)]$. 

Second, define a binary labeling $ \ell:V(H) \rightarrow \{0,1\} $ as follows: 

$$\ell(u_t) = \begin{cases}
1 & t \leq j \\
0 & j+1 \leq t
\end{cases}.$$

Notice that if $ j = 0 $ or $ j = k+1 $, then $ \ell'(\cdot) = 0 $ or $ \ell'(\cdot) = 1 $ respectively.

Thus, for every $ t $, $1 \leq t \leq k+1 $, the view of vertex $u_t$ in $(H, \id, \ell)$ is exactly the view of vertex $ v_t $ some of the $G_i$'s, $ i \in \Lambda_0$. \qed
\end{proof}


\section{Going below $\lceil \log k \rceil$}

In this section, we illustrate that in some cases one can go below the natural upper bound. 
More precisely, we exhibit an LCL whose natural encoding uses $k$ different labels, and for which one can find a certification that a solution exists with strictly less than $k$ labels.
The problem we use is \emph{distance-2 3-coloring}, which is the same as 3-coloring except that even having two nodes at distance 2 colored the same is forbidden. The natural encoding consists in giving the colors (that are between $1$ and $3$) to the vertices, but we show that one can actually certify distance-2 3-colorability with just two different certificates.

\begin{lemma}\label{lem:structure_3col}
A connected graph $G$ is distance-2 3-colorable if and only if it is a a cycle of length $0\!\mod 3$ or a path. 
\end{lemma}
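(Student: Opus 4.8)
The plan is to prove the biconditional in Lemma~\ref{lem:structure_3col} by establishing both directions separately, with most of the work going into the forward direction.

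\medskip

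\noindent\textbf{The easy direction.} First I would check that a path and a cycle of length $0 \bmod 3$ are indeed distance-2 3-colorable. For a path $v_1 v_2 \dots v_n$, assign $v_t$ the color $(t \bmod 3) + 1$; any two vertices at distance at most $2$ have indices differing by at most $2$, hence receive distinct colors. For a cycle $v_1 \dots v_{3m} v_1$, the same cyclic coloring $t \mapsto (t \bmod 3)+1$ works precisely because $3 \mid 3m$, so the coloring is consistent around the cycle, and again any two vertices within distance $2$ get different colors. This handles the ``if'' part.

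\medskip

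\noindent\textbf{The hard direction: structural constraints from distance-2 3-colorability.} For the ``only if'' part, suppose $G$ is connected and distance-2 3-colorable via a coloring $\phi$. The key observation is that the closed neighborhood $N[v]$ of every vertex $v$ must be rainbow-colored: all vertices in $N[v]$ are pairwise at distance at most $2$, so $\phi$ is injective on $N[v]$, which forces $|N[v]| \le 3$, i.e.\ $\deg(v) \le 2$ for every $v$. Hence $G$ has maximum degree at most $2$, and being connected it is either a path or a cycle. It remains to rule out cycles whose length is not divisible by $3$. If $G$ is a cycle $v_1 \dots v_n v_1$ of length $n$, then for each $t$ the three consecutive vertices $v_{t-1}, v_t, v_{t+1}$ (indices mod $n$) lie in $N[v_t]$ and so receive three distinct colors; thus the coloring pattern is forced to cycle through all three colors with period exactly $3$ as we walk around, which closes up consistently only if $3 \mid n$. (Concretely: $\phi(v_{t+1})$ is determined as the unique color different from $\phi(v_{t-1})$ and $\phi(v_t)$, so the sequence $\phi(v_1), \phi(v_2), \dots$ is eventually periodic with period $3$ from the start, and periodicity around a cycle of length $n$ requires $3 \mid n$.) This completes the proof.

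\medskip

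\noindent\textbf{Main obstacle.} The only genuinely delicate point is the cycle-length argument: one must argue carefully that the ``rainbow on every closed neighborhood'' condition on a cycle rigidly propagates a period-$3$ pattern, and that no alternative colorings exist, so that $n \not\equiv 0 \pmod 3$ yields a contradiction. Everything else — the degree bound and the constructions for paths and admissible cycles — is routine. I would also be slightly careful with tiny cases (very short paths or cycles, e.g.\ whether $K_3$ counts as a cycle of length $3$, which it does and which is distance-2 3-colorable), but these do not affect the argument.
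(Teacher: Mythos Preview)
Your proposal is correct and follows essentially the same approach as the paper: both argue that the closed neighborhood of any vertex must be rainbow, forcing maximum degree at most $2$, and then for cycles show that the coloring is forced into a period-$3$ pattern (the paper does this via an explicit induction on the index, you via the determinism of $\phi(v_{t+1})$ given $\phi(v_{t-1}),\phi(v_t)$). The backward direction is likewise the same, with your explicit formula $t\mapsto (t\bmod 3)+1$ just making concrete what the paper calls ``propagating the constraints.''
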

\begin{proof}
First, let $G$ be a connected distance-2 3-colorable graph.
Each vertex and its neighbors form a set of vertices that are pairwise at distance at most 2, so they should all have different colors. Since there are only three available colors, $G$ has maximum degree at most $2$. 
Thus, $G$ is a path or a cycle.

If $G$ is a path, then we are done, so assume $G$ is a cycle. 
Assume that the cycle is $v_1,v_2,\dots,v_k,v_1$. Consider a proper distance-2 3-coloring of $G$ with colors $\{0,1,2\}$. Notice that if a vertex $v$ of $ G $ has color $i \mod 3$, then necessarily, its two neighbors have colors $(i-1) \mod 3$ and $(i+1) \mod 3$. Without loss of generality, assume that the color of $ v_1 $ is 1 and the color of $ v_2$ is 2. Let $ [t]$ denote the remainder of $ t $ divided by $ 3$. So, $[t] \in \{0,1,2\}$. We prove by induction on $ t $ that $ v_t $ has color $ [t] $ for every $ t \in \{1,2,\dots,k\}$. This holds by assumption for $ t=1,2$. Now let $ t \geq 3$ and assume that $ v_{t'}$ has color $ [t'] $ for every $ t' < t$. By the induction hypothesis, the color of $ v_{t-1}$ is $ [t-1]$, hence the colors of its two neighbors, namely $v_{t-2}$ and $ v_t$, are $[t]$ and $[t+1]$. Again, by induction hypothesis, the color of $ v_{t-2}$ is $ [t-2] = [t+1]$. Thus the color of $ v_t$ must be $[t]$, proving the statement. Therefore, the color of $ v_k $ is $ [k]$. Finally, notice that because $ v_1 $ has color 1, and $ v_2$ has color $ 2 $, the color of $ v_k$ must be 0. Therefore $ [k] = 0$, meaning that the length of the cycle, $ k $ is equal to $0 \mod 3$.

Second, in both cycles of length $0 \mod 3$ and in paths, it is possible to find a proper distance-2 3-coloring, by simply choosing the color of one vertex and propagating the constraints.
\qed
\end{proof}

\begin{theorem}
One can certify distance-2 3-colorable graphs with a binary certification.
\end{theorem}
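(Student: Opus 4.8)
The plan is to use the structural characterization of Lemma~\ref{lem:structure_3col}: a connected graph is distance-2 3-colorable exactly when it is a path or a cycle whose length is a multiple of $3$, so it suffices to design a one-bit scheme accepting precisely this family. The local decision algorithm I would use is the following. A vertex that sees $3$ or more neighbors immediately rejects; this is legitimate because such a graph has a vertex of degree at least $3$ and hence, by the lemma, is not distance-2 3-colorable. A vertex of degree at most $1$ always accepts. A vertex $v$ of degree exactly $2$ checks a local ``$(100)$-pattern'' condition depending on its own bit: if $\ell(v)=1$ it accepts iff both its neighbors carry the bit $0$, and if $\ell(v)=0$ it accepts iff exactly one of its two neighbors carries the bit $1$. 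Morally, an accepting family of degree-$2$ configurations forces the bits to read $\dots 100100100\dots$ along the graph.

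For completeness I would exhibit certificates for the two kinds of yes-instances. For a cycle on $3m$ vertices indexed cyclically, put the bit $1$ on the vertices whose index is $\equiv 0 \pmod 3$ and $0$ on the others; since $3 \mid 3m$ this is well defined around the cycle, and a direct check shows every vertex (all of degree $2$) satisfies its condition. For a path $v_1 - \dots - v_n$, put the bit $1$ exactly on the vertices $v_i$ with $i \equiv 2 \pmod 3$. The two endpoints impose no constraint, and a routine case analysis on $i \bmod 3$ shows that each internal vertex $v_i$ with $2 \le i \le n-1$ satisfies its condition \emph{regardless of $n \bmod 3$} — this is the point of anchoring the pattern at $v_2$ rather than $v_1$. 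Hence every path and every cycle of length divisible by $3$ admits an accepting labeling. Note that the scheme uses no identifiers, so it already works in the anonymous model.

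For soundness, let $G$ be connected and not distance-2 3-colorable. If $G$ has a vertex of degree $\ge 3$ that vertex rejects under any labeling. Otherwise $G$ has maximum degree $\le 2$ and is connected, hence a path or a cycle; since it is not distance-2 3-colorable it must be a cycle on $n$ vertices with $n \not\equiv 0 \pmod 3$. Suppose some labeling $\ell$ is accepted everywhere. Every vertex has degree $2$, so every vertex satisfies the degree-$2$ condition and no vertex can fall back on the (empty) degree-$1$ constraint. From ``$\ell(v)=1$ implies both neighbors are $0$'' the set $A=\ell^{-1}(1)$ is independent in the cycle, and from ``$\ell(v)=0$ implies exactly one neighbor in $A$'' it is nonempty. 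The maximal runs of vertices not in $A$ partition the non-$A$ vertices into gaps, one between each pair of consecutive elements of $A$ around the cycle, and a short argument shows each gap has size exactly $2$: a gap of size $1$ would give its vertex two neighbors in $A$, and a gap of size $\ge 3$ would give its middle vertex none. With $p=|A|$ gaps of size $2$ we get $n = p + 2p = 3p$, contradicting $n \not\equiv 0 \pmod 3$. So some vertex rejects, which establishes soundness.

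The main obstacle is conceptual rather than computational: a single bit per vertex cannot encode a residue modulo $3$, so we cannot directly imitate the classical ``propagate a proper $3$-coloring'' certificate. Instead we must let the \emph{absence} of degree-$1$ vertices be exactly what separates bad cycles from paths, and make sure the relaxed constraint at endpoints can never be invoked inside a cycle (which is automatic, since cycles are $2$-regular). Once that design choice is fixed, the remaining work — checking that the path certificate is accepted for every length, and that accepting degree-$2$ configurations on a cycle force divisibility by $3$ — is the routine combinatorial core sketched above.
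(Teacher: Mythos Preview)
Your proof is correct and essentially identical to the paper's: your degree-$2$ acceptance condition is exactly the paper's ``$\score(v)=1$'' rule unpacked into the two cases $\ell(v)=1$ and $\ell(v)=0$, and your certificate assignments for paths and $3m$-cycles are the same $\dots 100100\dots$ pattern (just shifted by one on paths). If anything, your soundness argument for cycles with $n\not\equiv 0\pmod 3$ is more explicit than the paper's, which simply asserts that some vertex must reject.
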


\begin{proof}
We prove that paths and cycles of length $0\!\mod 3$ can be recognized with a binary certification, by Lemma~\ref{lem:structure_3col}, this will give the desired result.

The idea of the certification is to give certificates of the form: 
\[...1,0,0,1,0,0,1,0,0...\]
Let us describe first the verifier algorithm of a vertex $v$: 
\begin{enumerate}
    \item If the degree is strictly more than 2, reject,  
    \item If the degree is 2, accept if and only if, $\score(v)=1$, that is, among $v$ and its neighbors, exactly one has a 1 as certificate.
    \item If the degree is 1, accept. 
\end{enumerate}
Because of Item 1, only paths and cycles can be accepted. For paths, the following labeling makes every vertex accept: choose one endpoint $u$, and give label 1 to a vertex $v$, if and only if, the distance from $u$ to $v$ is $0 \mod 3$.
For cycles of length $0 \mod 3$, the following labeling makes every vertex accept: choose an orientation and an arbitrary vertex $u$, and again give label 1 if and only the distance from $u$ to $v$ in the sense of the orientation is $0 \mod 3$. 
It is easy to see that in cycles of length different from $0 \mod 3$, at least one vertex will reject with Item 2.

Hence, we have a proper certification that a graph is a path or a cycle of length $0 \mod 3$, with only two different labels (that is, just one bit).
\qed
\end{proof}

\section{Challenges and open questions}

We have proved the one bit is not enough for certifying $k$-colorability for $k \geq 3$.  
We conjecture that the answer to Question~\ref{question:logk} is negative, that is, that $\lceil \log k \rceil$ is the optimal certification size for $k$-colorability, and this is the very first step. 

There are several challenges to overcome before one can hope to prove the conjecture. First, it would be nice to have a more general model, where the nodes can see their neighbors' identifiers, or at least distinguish them, and even better, where the nodes can see at a larger distance. At least the first step in this direction might work by using some Ramsey argument, but then losing the upper bound on the identifier range. Second, it might be necessary to use graphs of large chromatic number that do not have large cliques as subgraph, and these have complicated structures.

A different direction is to understand other LCLs, and to try to see which ones have a certification that is more efficient than the natural encoding (such as distance-2 3-coloring) and which do not. Maybe in this direction, one could characterize exactly which properties can be certified with one bit.

\bibliographystyle{plain}
\bibliography{biblio-JGA}

\begin{thebibliography}{10}

\bibitem{AfekKY90}
Yehuda Afek, Shay Kutten, and Moti Yung.
\newblock Memory-efficient self stabilizing protocols for general networks.
\newblock In {\em Distributed Algorithms, 4th International Workshop, {WDAG}
  '90}, volume 486, pages 15--28, 1990.

\bibitem{BarenboimE13}
Leonid Barenboim and Michael Elkin.
\newblock {\em Distributed Graph Coloring: Fundamentals and Recent
  Developments}.
\newblock Synthesis Lectures on Distributed Computing Theory. Morgan {\&}
  Claypool Publishers, 2013.

\bibitem{BlinFP14}
L{\'{e}}lia Blin, Pierre Fraigniaud, and Boaz Patt{-}Shamir.
\newblock On proof-labeling schemes versus silent self-stabilizing algorithms.
\newblock In {\em Stabilization, Safety, and Security of Distributed Systems -
  16th International Symposium, {SSS} 2014}, volume 8756, pages 18--32, 2014.

\bibitem{BousquetFP21}
Nicolas Bousquet, Laurent Feuilloley, and Th{\'{e}}o Pierron.
\newblock Local certification of graph decompositions and applications to
  minor-free classes.
\newblock In {\em 25th International Conference on Principles of Distributed
  Systems, {OPODIS} 2021}, volume 217 of {\em LIPIcs}, pages 22:1--22:17, 2021.

\bibitem{BousquetFP22}
Nicolas Bousquet, Laurent Feuilloley, and Th{\'{e}}o Pierron.
\newblock What can be certified compactly?
\newblock In {\em {PODC} '22: {ACM} Symposium on Principles of Distributed
  Computing}, page To appear. {ACM}, 2022.

\bibitem{Censor-HillelPP20}
Keren Censor{-}Hillel, Ami Paz, and Mor Perry.
\newblock Approximate proof-labeling schemes.
\newblock {\em Theor. Comput. Sci.}, 811:112--124, 2020.

\bibitem{CrescenziFP19}
Pierluigi Crescenzi, Pierre Fraigniaud, and Ami Paz.
\newblock Trade-offs in distributed interactive proofs.
\newblock In {\em 33rd International Symposium on Distributed Computing, {DISC}
  2019}, volume 146 of {\em LIPIcs}, pages 13:1--13:17, 2019.

\bibitem{EsperetL22}
Louis Esperet and Benjamin L{\'{e}}v{\^{e}}que.
\newblock Local certification of graphs on surfaces.
\newblock {\em Theor. Comput. Sci.}, 909:68--75, 2022.

\bibitem{Feuilloley21}
Laurent Feuilloley.
\newblock Introduction to local certification.
\newblock {\em Discret. Math. Theor. Comput. Sci.}, 23(3), 2021.

\bibitem{FeuilloleyFHPP21}
Laurent Feuilloley, Pierre Fraigniaud, Juho Hirvonen, Ami Paz, and Mor Perry.
\newblock Redundancy in distributed proofs.
\newblock {\em Distributed Comput.}, 34(2):113--132, 2021.

\bibitem{FeuilloleyFMRRT21-b}
Laurent Feuilloley, Pierre Fraigniaud, Pedro Montealegre, Ivan Rapaport, Eric
  R{\'{e}}mila, and Ioan Todinca.
\newblock Local certification of graphs with bounded genus.
\newblock {\em CoRR}, abs/2007.08084, 2020.

\bibitem{FeuilloleyFMRRT21}
Laurent Feuilloley, Pierre Fraigniaud, Pedro Montealegre, Ivan Rapaport,
  {\'{E}}ric R{\'{e}}mila, and Ioan Todinca.
\newblock Compact distributed certification of planar graphs.
\newblock {\em Algorithmica}, 83(7):2215--2244, 2021.

\bibitem{FischerOS21}
Orr Fischer, Rotem Oshman, and Dana Shamir.
\newblock Explicit space-time tradeoffs for proof labeling schemes in graphs
  with small separators.
\newblock In {\em 25th International Conference on Principles of Distributed
  Systems, {OPODIS} 2021}, volume 217 of {\em LIPIcs}, pages 21:1--21:22, 2021.

\bibitem{FraigniaudHK16}
Pierre Fraigniaud, Marc Heinrich, and Adrian Kosowski.
\newblock Local conflict coloring.
\newblock In Irit Dinur, editor, {\em {IEEE} 57th Annual Symposium on
  Foundations of Computer Science, {FOCS} 2016}, pages 625--634, 2016.

\bibitem{FraigniaudMRT22}
Pierre Fraigniaud, Pedro Montealegre, Ivan Rapaport, and Ioan Todinca.
\newblock A meta-theorem for distributed certification.
\newblock In Merav Parter, editor, {\em Structural Information and
  Communication Complexity - 29th International Colloquium, {SIROCCO} 2022},
  volume 13298, pages 116--134, 2022.

\bibitem{GoosS16}
Mika G{\"{o}}{\"{o}}s and Jukka Suomela.
\newblock Locally checkable proofs in distributed computing.
\newblock {\em Theory Comput.}, 12(1):1--33, 2016.

\bibitem{KormanK07}
Amos Korman and Shay Kutten.
\newblock Distributed verification of minimum spanning trees.
\newblock {\em Distributed Comput.}, 20(4):253--266, 2007.

\bibitem{KormanKP10}
Amos Korman, Shay Kutten, and David Peleg.
\newblock Proof labeling schemes.
\newblock {\em Distributed Comput.}, 22(4):215--233, 2010.

\bibitem{MausT20}
Yannic Maus and Tigran Tonoyan.
\newblock Local conflict coloring revisited: Linial for lists.
\newblock In Hagit Attiya, editor, {\em 34th International Symposium on
  Distributed Computing, {DISC} 2020}, volume 179 of {\em LIPIcs}, pages
  16:1--16:18, 2020.

\bibitem{NaorS95}
Moni Naor and Larry~J. Stockmeyer.
\newblock What can be computed locally?
\newblock {\em {SIAM} J. Comput.}, 24(6):1259--1277, 1995.

\bibitem{OstrovskyPR17}
Rafail Ostrovsky, Mor Perry, and Will Rosenbaum.
\newblock Space-time tradeoffs for distributed verification.
\newblock In {\em Structural Information and Communication Complexity - 24th
  International Colloquium, {SIROCCO} 2017}, volume 10641, pages 53--70, 2017.

\bibitem{Suomela20}
Jukka Suomela.
\newblock Using round elimination to understand locality.
\newblock {\em {SIGACT} News}, 51(3):63--81, 2020.

\end{thebibliography}
\end{document}